\documentclass[conference]{IEEEtran}
\IEEEoverridecommandlockouts
\usepackage{amsmath}
\usepackage{amssymb}
\usepackage{amsfonts}
\usepackage{graphicx}
\usepackage{subcaption}
\usepackage[numbers, square, comma, compress]{natbib}
\usepackage{stmaryrd}
\usepackage[ruled,linesnumbered]{algorithm2e}

\SetCommentSty{mycommfont}

\newtheorem{lemma}{Lemma}

\newtheorem{prop}{Proposition}
\newtheorem{ex}{Example}
\newtheorem{defn}{Definition}
\newtheorem{rem}{Remark}

\usepackage{tikz}\usetikzlibrary{calc}
\usetikzlibrary{shapes.geometric}

\renewcommand{\IEEEQED}{\IEEEQEDopen}
\providecommand{\abs}[1]{\ensuremath{\left\lvert #1 \right\rvert}}
\providecommand{\norm}[1]{\ensuremath{\left\Vert #1 \right\Vert}}
\providecommand{\vv}[1]{\textquotedblleft #1\textquotedblright}

\newcommand{\X}{\mathsf{X}}
\newcommand{\Y}{\mathsf{Y}}
\newcommand{\M}{\mathsf{M}}
\newcommand{\Z}{\mathsf{Z}}
\newcommand{\U}{\mathsf{U}}

\DeclareMathOperator*{\rank}{rank}
\providecommand{\dsb}[2]{\llbracket #1; #2 \rrbracket}

\usepackage{algorithmic}
\usepackage{graphicx}
\usepackage{textcomp}
\usepackage{xcolor}
\def\BibTeX{{\rm B\kern-.05em{\sc i\kern-.025em b}\kern-.08em
    T\kern-.1667em\lower.7ex\hbox{E}\kern-.125emX}}
\begin{document}
\addtolength{\topmargin}{0.06in}

\title{Finite-blocklength performance of polar wiretap codes under a total variation secrecy constraint\\
\thanks{L. Luzzi was supported in part by 
CY Initiative of Excellence (ANR-16-IDEX-0008), the France 2030 ANR program \vv{PEPR Networks of the Future} (ANR-22-PEFT-0009) and by the Horizon Europe/JU SNS project ROBUST-6G (GA no. 101139068). 
}
}

 \author{
  \IEEEauthorblockN{Laura Luzzi\IEEEauthorrefmark{2}\IEEEauthorrefmark{3}, Valerio Bioglio\IEEEauthorrefmark{1}} 
      \IEEEauthorblockA{\IEEEauthorrefmark{2}ETIS (UMR 8051 CY Cergy Paris Université, ENSEA, CNRS), e-mail: laura.luzzi@ensea.fr}        \IEEEauthorblockA{\IEEEauthorrefmark{3}\'Equipe COSMIQ, Inria de Paris, e-mail: laura.luzzi@inria.fr}
   \IEEEauthorblockA{\IEEEauthorrefmark{1}Dipartimento di Informatica, Università di Torino, e-mail: valerio.bioglio@unito.it}
}

\maketitle

\begin{abstract}
We study the performance of polarizing codes over a  degraded symmetric wiretap channel under a total variation distance (TVD) secrecy constraint. We show that the leakage can be bounded by the sum of the TVDs of the bit-channels corresponding to the confidential and frozen bits. In the asymptotic regime, this gives a new criterion to design wiretap codes with vanishing TVD leakage. In finite blocklength, it allows us to compute lower bounds for the secrecy rate of different families of polarizing wiretap codes over a binary erasure wiretap channel. 
\end{abstract}

\begin{IEEEkeywords}
Physical Layer Security, Wiretap channel, Polar codes, Finite Blocklength, Information Leakage, Total Variation Distance
\end{IEEEkeywords}

\section{Introduction}
Wiretap coding allows confidential communication without secret keys in the presence of passive adversaries, as long as an asymmetry in the channel quality between the legitimate receiver and the adversary can be guaranteed \cite{Wyner}. Widely adopted metrics for the information leakage include the (normalized or unnormalized) mutual information (MI) and the TVD \cite{Bloch_Laneman}.
In the asymptotic setting where the blocklength tends to infinity, secrecy capacity-achieving coding schemes have been proposed, notably based on polar codes \cite{Mahdavifar_Vardy}, which are an attractive solution since they are already part of the 5G New Radio standard. 
%
However, for applications requiring short packets or low latency, it is not possible in general to guarantee a vanishing information leakage; for a target leakage $\delta$ and error probability $\epsilon$, the gap of the finite-blocklength secrecy rate from the secrecy capacity must be taken into account.

Building on the finite-length channel coding rates analysis in \cite{polyanskiy2010channel}, authors in \cite{yang2019wiretap} proved tight bounds on the optimal second-order coding rate over discrete memoryless and Gaussian wiretap channels under a total variation (TV) secrecy constraint. 
Recent works have investigated the finite-length performance of practical wiretap codes. In \cite{pfister2017quantifying}, 
the equivocation of very short wiretap codes is quantified over a binary erasure channel (BEC); \cite{harrison2020exact} proposed algebraic methods to compute exact equivocation in special cases. 
Authors in \cite{nooraiepour2020secure} consider randomized Reed–Muller codes for Gaussian wiretap channels; 
other works have used deep learning to design wiretap codes \cite{fritschek2020deep,besser2020wiretap,Rana_Chou}. However, these approaches are currently limited to very short blocklengths, due to the complexity of estimating the leakage.

Wiretap Polar codes were considered in \cite{Taleb_Benammar_2021}, which used  a modified SC decoder to compute a lower bound for the MI leakage; \cite{mahdavifar2024finite} 
provided upper and lower bounds for the MI leakage of wiretap schemes based on polar codes as a function of their finite-length scaling; \cite{shakiba2021finite} compared the performance of polar and Reed-Muller codes under a TVD leakage constraint 
when the main channel is error-free and the eavesdropper's channel is a BEC; 
\cite{lin2024achieving} extended this study to multi-kernel PAC codes.  

In this paper, we adopt the same viewpoint as \cite{shakiba2021finite}, and consider a TV secrecy constraint.
While \cite{shakiba2021finite} only considered the 
case where the main channel is error-free,
we extend the approach to the general degraded symmetric wiretap channel. 
We show that the TV information leakage for polar-like wiretap codes is upper bounded by the sum of the TVDs of the bit-channels corresponding to the confidential bits and the frozen bits. 
In the asymptotic regime, this allows to propose a variant of the strong secrecy scheme in \cite{Mahdavifar_Vardy} which is tailored to the TV metric.
%
For finite blocklengths, our bound provides a simple criterion to design wiretap codes for a given target leakage $\delta$ and error probability $\epsilon$ with  guaranteed secrecy rate. In the case when the main channel and eavesdropper's channel are BECs, 
we compute explicit 
lower bounds for the secrecy rates of different families of polarizing codes --- namely Polar codes, Reed-Muller (RM) codes \cite{abbe2020reed}, multi-kernel (MK) polar codes \cite{bioglio2020multi} and adjacent-bit swapped (ABS) codes \cite{li2022adjacent}. 

\section{Notation and preliminaries}
\paragraph*{Notation} All logarithms are in base $2$. We use column notation for vectors, and denote by $x^n$ the vector $(x_1,\ldots,x_n)^t$, and by $x_i^j$ the subvector $(x_i,x_{i+1},\ldots,x_j)^t$. Given $\mathcal{A} \subseteq \llbracket 1; n \rrbracket$, the notation $x_{\mathcal{A}}$ refers to the projection of $x^n$ on the coordinates in $\mathcal{A}$. 
We denote the TVD between two discrete
distributions by $\mathbb{V}(p,q)=\frac{1}{2}\norm{p-q}_1$. 

\subsection{Properties of the TVD of symmetric channels}
We recall Gallager's definition 
\cite[p. 94]{Gallager}:
\begin{defn}
 A discrete channel $W :\mathcal{X} \to \mathcal{Y}$ is \emph{symmetric} if
 there is a partition $\mathcal{Y}=\mathcal{Y}_1 \cup \cdots \cup \mathcal{Y}_r$ 
 such that:
 \begin{enumerate}
 \item $\forall x,\bar{x} \in \mathcal{X}$, there exists a permutation $\pi_{x\bar{x}}:\mathcal{Y} \to \mathcal{Y}$ such that $\forall k \in \dsb{1}{r}$, $\pi_{x\bar{x}}(\mathcal{Y}_k)=\mathcal{Y}_k$, and $\forall y \in \mathcal{Y}, \; W(y|x)=W(\pi_{x\bar{x}}(y)|\bar{x})$, 
 \item $\forall k \in \dsb{1}{r}$, $\forall y,\bar{y} \in \mathcal{Y}_k$, there exists a permutation $\pi_{y\bar{y}}: \mathcal{X} \to \mathcal{X}$ such that 
 $\forall x \in \mathcal{X}, \; W(y|x)=W(\bar{y}|\pi_{y\bar{y}}(x))$.
 \end{enumerate}
 \end{defn}


 \begin{defn}
Consider a symmetric channel $W: \mathcal{X} \to \mathcal{Y}$ with transition probability $p_{\Y|\X}$. Given an input distribution $p_{\X}$ over $\mathcal{X}$ and the corresponding output $p_{\Y}= p_{\Y|\X} \circ p_{\X}$, let
 \begin{align*}
 &T(W, p_{\X})\!=
 \!\!\sum_{x \in \mathcal{X}, y \in \mathcal{Y}}\!\! \abs{p_{\X\Y}(x,y)\!-\!p_{\X}(x)p_{\Y}(y)}\!=
 \!2\mathbb{V}(p_{\X\Y},p_{\X}p_{\Y}).
 \end{align*}
 The \emph{TVD of the channel $W$} is defined as
 $$T(W)=\max_{p_{\X}} T(W,p_{\X}).$$
 \end{defn}


The following result may be of independent interest:
\begin{lemma} \label{lemma_TVD_maximized}
For a symmetric channel $W: \mathcal{X} \to \mathcal{Y}$ with transition probability $p_{\Y|\X}$, 
$$T(W) =T(W,p_{\bar{\X}}),$$
where $p_{\bar{\X}}$ is the uniform input distribution over $\mathcal{X}$ and $p_{\bar{\Y}}= p_{\Y|\X} \circ p_{\bar{\X}}$ is the corresponding output distribution.
\end{lemma}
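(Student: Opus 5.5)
The approach is to combine an invariance property of $T(W,\cdot)$ under the symmetries of $W$ with a concavity argument. First I rewrite the functional in a per-input form. Since $p_{\X\Y}(x,y)=p_{\X}(x)W(y|x)$,
\begin{align*}
T(W,p_{\X})=\sum_{x\in\mathcal{X}} p_{\X}(x)\sum_{y\in\mathcal{Y}}\abs{W(y|x)-p_{\Y}(y)},
\end{align*}
which is the expected $\ell_1$ distance between the row $W(\cdot|x)$ and the output mixture. Equivalently, $T(W,p_{\X})=2\max_{\mathcal{A}\subseteq\mathcal{X}\times\mathcal{Y}}\bigl(p_{\X\Y}(\mathcal{A})-p_{\X}p_{\Y}(\mathcal{A})\bigr)$.

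\textbf{Step 1 (invariance).} Using condition 1 of the symmetry definition, for each pair $x,\bar{x}$ there is a permutation $\pi_{x\bar{x}}$ of $\mathcal{Y}$ with $W(y|x)=W(\pi_{x\bar{x}}(y)|\bar{x})$. Together with condition 2 (columns inside each $\mathcal{Y}_k$ are permutations of each other), I want to exhibit a group $G$ of pairs $(\sigma,\tau)$. Here $\sigma$ is a permutation of $\mathcal{X}$ acting transitively, $\tau$ is a permutation of $\mathcal{Y}$ preserving each $\mathcal{Y}_k$, and $W(\tau(y)|\sigma(x))=W(y|x)$. Relabelling the sums then gives $T(W,p_{\X}\circ\sigma^{-1})=T(W,p_{\X})$ for every $\sigma$ in the projection of $G$. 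Since this action is transitive, the average $\frac{1}{|G|}\sum_{\sigma}p_{\X}\circ\sigma^{-1}$ is exactly the uniform distribution $p_{\bar{\X}}$.

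\textbf{Step 2 (concavity).} If $p_{\X}\mapsto T(W,p_{\X})$ is concave, then Jensen applied to this average yields $T(W,p_{\bar{\X}})\ge T(W,p_{\X})$, which proves the lemma. I expect this step to be the main obstacle. Each summand $p_{\X}(x)\abs{W(y|x)-p_{\Y}(y)}$ is a product of a linear term and a convex term, so concavity is not automatic. In the binary-output-like case it reduces to $2q(1-q)$-type expressions, which are concave, but I do not have a general proof.

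\textbf{Fallback if concavity fails.} I would instead bound $T(W,p_{\X})$ directly, block by block over the partition $\{\mathcal{Y}_k\}$.
\begin{itemize}
\item Condition 2 implies that at uniform input $p_{\bar{\Y}}$ is constant, say $c_k$, on each $\mathcal{Y}_k$. It also implies that the multiset $\{W(y|x)\}_{x}$ is the same for all $y\in\mathcal{Y}_k$. Hence $T(W,p_{\bar{\X}})=\sum_k\sum_{y\in\mathcal{Y}_k}\frac{1}{|\mathcal{X}|}\sum_x\abs{W(y|x)-c_k}$.
\item For general $p_{\X}$, I write $\sum_{y}\abs{W(y|x)-p_{\Y}(y)}=2\sum_y\bigl(W(y|x)-p_{\Y}(y)\bigr)^+$ and aim to compare this block-wise with the uniform case. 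On each $\mathcal{Y}_k$, I would average over the orbit of $y$ under the $\tau$'s; these orbits cover $\mathcal{Y}_k$ because the rows restricted to $\mathcal{Y}_k$ are permutations of one another. I would then use convexity of $(\cdot)^+$ in the subtracted term, pushing the average of $p_{\Y}(\tau(y))$ inside, which is constant in $y$.
\item This should reduce the comparison to a one-dimensional mean-absolute-deviation inequality over the finite multiset $\{W(y|x)\}_x$. That inequality I would verify explicitly, and it is the point where the symmetry really has to be used.
\end{itemize}
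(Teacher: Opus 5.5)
The gap is where you suspected, and it cannot be closed. Step~2 needs $p_{\X}\mapsto T(W,p_{\X})$ to be concave. You do not prove this, and it fails in general once $\abs{\mathcal{X}}\ge 3$. So does the statement itself.

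\textbf{Counterexample.} Take $\mathcal{X}=\mathcal{Y}=\{0,1,2\}$, with $W(y|x)$ depending only on $x+y \bmod 3$:
\[
W(\cdot|0)=(\tfrac23,\tfrac13,0),\quad W(\cdot|1)=(\tfrac13,0,\tfrac23),\quad W(\cdot|2)=(0,\tfrac23,\tfrac13).
\]
Rows and columns are permutations of one another, so $W$ is symmetric with a single block. Moreover, $(x,y)\mapsto(x+1,y-1)$ generates exactly the transitive symmetry group your Step~1 asks for.
\begin{itemize}
\item Uniform input: $p_{\bar{\Y}}$ is uniform, every row is at $\ell_1$-distance $\tfrac23$ from it, and $T(W,p_{\bar{\X}})=\tfrac23$.
\item Input $p_{\X}=(\tfrac{5}{12},\tfrac{5}{12},\tfrac16)$: then $p_{\Y}=(\tfrac{5}{12},\tfrac14,\tfrac13)$ and $\norm{W(\cdot|x)-p_{\Y}}_1=\tfrac23,\tfrac23,\tfrac56$, so $T(W,p_{\X})=\tfrac{25}{36}>\tfrac23$.
\end{itemize}
$T(W,\cdot)$ is invariant under the cyclic shifts, and the three shifts of this $p_{\X}$ average to the uniform distribution. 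Jensen's inequality therefore fails, so $T(W,\cdot)$ is not concave.

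Your fallback cannot work either. The one-dimensional inequality it reduces to compares mean absolute deviations about the mean, $\sum_x p_{\X}(x)\abs{a_x-\sum_{x'}p_{\X}(x')a_{x'}}$. This quantity is not concave in the weights: the optimal centre of $\mathbb{E}\abs{A-c}$ is a median, not the mean.

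Restricting to channels symmetric under translations of $\{0,1\}^m$ does not help. That is the structure of the channel used in step~(b) of Lemma~\ref{lemma_bounds}. Take $W(y|x)=w_{x\oplus y}$ on $\{0,1\}^2$ with $(w_{00},w_{01},w_{10},w_{11})=(\tfrac12,0,\tfrac14,\tfrac14)$. The input $(\tfrac38,\tfrac18,\tfrac38,\tfrac18)$ on $(00,01,10,11)$ gives $T=\tfrac{9}{16}$, against $\tfrac12$ for the uniform input.

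\textbf{The paper's sketch.} It has the same hole. It asserts concavity of $T(W,p_{\X})$ by citing Csisz\'ar, and then uses $\max_{p_{\X}}T(W,p_{\X})=R_f(W)$. Both facts hold for the informativity $\inf_Q\sum_x p_{\X}(x)\mathbb{V}(p_{\Y|\X=x},Q)$. That functional is an infimum of functions linear in $p_{\X}$, and a minimax theorem applies to it. They do not hold for $T(W,p_{\X})$, whose reference distribution is pinned to $p_{\Y}$. In the ternary example, $\tfrac12 T(W)\ge\tfrac{25}{72}>\tfrac13=K\ge R_f(W)$.

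\textbf{What is true.} Your $2q(1-q)$ remark points to the right case, but it is binary \emph{input}, not binary output, that produces it. With $p=p_{\X}(0)$,
\[
W(\cdot|0)-p_{\Y}=(1-p)\bigl(W(\cdot|0)-W(\cdot|1)\bigr),\qquad W(\cdot|1)-p_{\Y}=-p\bigl(W(\cdot|0)-W(\cdot|1)\bigr).
\]
Hence $T(W,p_{\X})=2p(1-p)\norm{W(\cdot|0)-W(\cdot|1)}_1$, which is maximized at $p=\tfrac12$ for every binary-input channel, symmetric or not. This covers BMS channels and the BEC example. It does not cover step~(b) of Lemma~\ref{lemma_bounds}, where the lemma is applied to a channel with input alphabet $\{0,1\}^{n-r}$; that step needs a separate argument.

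Independently, Step~1 assumes joint symmetries $(\sigma,\tau)$ acting transitively on $\mathcal{X}$. The pairwise row permutations in the definition do not provide these, though the point is moot given the above.
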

\begin{IEEEproof}[Sketch of proof]
Recall that the TVD $\mathbb{V}(p,q)$ corresponds to the  $f$-divergence $D_f(p,q)$ with $f(u)=\frac{1}{2}\abs{u-1}$. Thus,  $T(W, p_{\X})$ is the corresponding $f$-information \cite[p. 134-135]{Polyanskiy_Wu}. It follows from \cite[Lemma 3.2]{csiszar1972class} that $T(W, p_{\X})$ is a continuous and concave function of $p_{\X}$. Note that although $f$ is not strictly convex, the 
sufficiency part of the second statement of \cite[Theorem 3.2]{csiszar1972class} still holds. 
If $W$ is symmetric, $p_{\bar{\Y}}$ is constant on each set $\mathcal{Y}_k$, $k \in \dsb{1}{r}$, and  $\mathbb{V}(p_{\Y|\X=x},p_{\bar{\Y}})$ takes the same value $K$ for all $x \in \mathcal{X}$. This implies that the information radius $R_f(W)=\inf_{p_{\Y}} \max_{x} \mathbb{V}(p_{\Y|\X=x},p_{\Y}) \leq K$. But by \cite[Theorem 3.2]{csiszar1972class}, $R_f(W)=\max_{p_{\X}} T(W, p_{\X})=K$. \end{IEEEproof}

\begin{ex} For a binary erasure channel (BEC) $W$ with erasure probability $p$, $T(W)=1-p$. 
\end{ex} 

\begin{defn}
A channel $Q: \mathcal{X} \to \mathcal{Z}$ is \emph{degraded} with respect to a channel $W: \mathcal{X} \to \mathcal{Z}$ if there exists a channel $\mathcal{P}: \mathcal{Y} \to \mathcal{Z}$ such that $\forall x \in \mathcal{X}, \;\forall z \in \mathcal{Z}$,
$Q(z|x)=\sum_{y \in \mathcal{Y}} W(y|x) \mathcal{P}(z|y)$. In this case, we write $Q \preccurlyeq W$.
\end{defn}
A proof of the next property can be found in \cite[Lemma 2]{shakiba2021finite}. 
\begin{lemma} \label{lemma_TVD_degraded}
If $Q \preccurlyeq W$, then $\forall p_{\X}$, $T(Q, p_{\X}) \leq T(W, p_{\X})$. 
\end{lemma}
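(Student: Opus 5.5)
The natural route is to treat $T(\cdot,p_{\X})$ as the $f$-information associated with $f(u)=\frac12\abs{u-1}$ (up to the factor $2$), and to obtain the claim from the data-processing inequality for $f$-divergences. I would, however, write out the elementary argument directly, since it is a one-line triangle inequality.

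Fix $p_{\X}$ and let $\mathcal{P}:\mathcal{Y}\to\mathcal{Z}$ be the degrading channel, so that $Q(z|x)=\sum_y W(y|x)\mathcal{P}(z|y)$. Let $p_{\Y}$ denote the output distribution of $W$ under $p_{\X}$, and $p_{\Z}$ the output distribution of $Q$ under $p_{\X}$. Then
$$p_{\Z}(z)=\sum_x p_{\X}(x)Q(z|x)=\sum_y p_{\Y}(y)\mathcal{P}(z|y).$$
The joint distribution satisfies $p_{\X\Z}(x,z)=\sum_y p_{\X\Y}(x,y)\mathcal{P}(z|y)$, and likewise the product distribution satisfies $p_{\X}(x)p_{\Z}(z)=\sum_y p_{\X}(x)p_{\Y}(y)\mathcal{P}(z|y)$. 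In other words, both $p_{\X\Z}$ and $p_{\X}p_{\Z}$ are obtained from $p_{\X\Y}$ and $p_{\X}p_{\Y}$, respectively, by passing them through the same channel $\mathrm{id}\times\mathcal{P}$. Checking this structural fact is the only step that needs care: the product form must be preserved because the $\X$-marginal is unchanged.

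It then remains to bound the distance:
$$T(Q,p_{\X})=\sum_{x,z}\Bigl|\sum_y \mathcal{P}(z|y)\bigl(p_{\X\Y}(x,y)-p_{\X}(x)p_{\Y}(y)\bigr)\Bigr| .$$
By the triangle inequality, this is at most
$$\sum_{x,y}\abs{p_{\X\Y}(x,y)-p_{\X}(x)p_{\Y}(y)}\sum_z \mathcal{P}(z|y).$$
Since $\sum_z\mathcal{P}(z|y)=1$ for every $y$, this last expression equals $T(W,p_{\X})$, which proves the claim.

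There is no real obstacle. The argument does not use symmetry, and it holds for every input distribution $p_{\X}$.
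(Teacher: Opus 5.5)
Your proof is correct. You write both $p_{\X\Z}$ and $p_{\X}p_{\Z}$ as images of $p_{\X\Y}$ and $p_{\X}p_{\Y}$ under the same kernel $\mathrm{id}\times\mathcal{P}$, then apply the triangle inequality using $\mathcal{P}(z|y)\geq 0$ and $\sum_z \mathcal{P}(z|y)=1$; this is exactly the data-processing inequality for the TVD.

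The paper gives no proof of its own and simply defers to \cite[Lemma 2]{shakiba2021finite}, so your argument serves as a self-contained replacement. It is equivalent to the $f$-divergence data-processing route with $f(u)=\frac{1}{2}\abs{u-1}$ that the paper already invokes for Lemma \ref{lemma_TVD_maximized}, and, as you observe, it needs no symmetry assumption.
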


The quality of a binary-input channel $W: \{0,1\} \to \mathcal{Y}$ can be measured by its \emph{Bhattacharyya parameter} 
$$Z(W)=\sum_{y \in \mathcal{Y}} \sqrt{W(y|0)W(y|1)} \in [0,1].$$ 
The following inequalities were proved in \cite[Appendix A]{arikan2009}:
\begin{lemma} \label{Arikan_TVD_bound}
For any binary-input memoryless symmetric channel (BMS) $W$,
\begin{align*}
&C(W) \leq T(W) \leq \sqrt{1-Z(W)^2}.
\end{align*}
\end{lemma}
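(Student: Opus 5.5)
By Lemma~\ref{lemma_TVD_maximized}, for a BMS channel $W$ we have $T(W)=T(W,p_{\bar{\X}})$ with $p_{\bar{\X}}$ uniform on $\{0,1\}$. The plan is therefore to compute $T(W,p_{\bar{\X}})$ explicitly and compare it termwise with $C(W)=I(p_{\bar{\X}},W)$ and with $Z(W)$.

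With uniform input, $p_{\bar{\Y}}(y)=\frac12(W(y|0)+W(y|1))$. A direct computation gives
\begin{align*}
T(W,p_{\bar{\X}})&=\sum_{x,y}\tfrac12\abs{W(y|x)-p_{\bar{\Y}}(y)}\\
&=\tfrac12\sum_y \abs{W(y|0)-W(y|1)}=\mathbb{V}(W(\cdot|0),W(\cdot|1)).
\end{align*}
So $T(W)$ is the total variation distance between the two conditional output laws.

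\emph{Upper bound.} This is the classical inequality relating TV distance and the Bhattacharyya (Hellinger) coefficient. Write $a_y=\sqrt{W(y|0)}$ and $b_y=\sqrt{W(y|1)}$, so that $\abs{a_y^2-b_y^2}=\abs{a_y-b_y}(a_y+b_y)$. Cauchy--Schwarz then gives
\begin{align*}
\Bigl(\sum_y\abs{a_y^2-b_y^2}\Bigr)^2&\le \sum_y(a_y-b_y)^2\sum_y(a_y+b_y)^2\\
&=(2-2Z)(2+2Z)=4(1-Z^2).
\end{align*}
Dividing by $4$ yields $T(W)^2\le 1-Z(W)^2$.

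\emph{Lower bound.} Symmetry is used here to decompose the BMS channel into binary symmetric subchannels. Pair each $y$ with its conjugate $\pi(y)$, where $W(y|1)=W(\pi(y)|0)$. Equivalently, write $W$ as a mixture of BSCs with crossover probabilities $\epsilon\in[0,\tfrac12]$ under some weight distribution; for instance, take each $\mathcal{Y}_k$ of Gallager's partition, or each pair $\{y,\pi(y)\}$. Both quantities are linear in this mixture. Capacity decomposes as $C(W)=\mathbb{E}[1-h(\epsilon)]$, because the output reveals which subchannel was used and the uniform input is optimal for each. For the TV side, $\mathbb{V}(W(\cdot|0),W(\cdot|1))=\mathbb{E}[1-2\epsilon]$, since the $\ell_1$ norm splits over the blocks and a BSC($\epsilon$) contributes $\abs{1-2\epsilon}$. (Erasure-type blocks with $W(y|0)=W(y|1)$ correspond to $\epsilon=\tfrac12$ and contribute $0$ to both.)

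It then suffices to prove the scalar inequality $1-h(\epsilon)\le 1-2\epsilon$ on $[0,\tfrac12]$, i.e.\ $h(\epsilon)\ge 2\epsilon$. This follows from concavity of $h$: the chord from $(0,0)$ to $(\tfrac12,1)$ is the line $2\epsilon$, and a concave function lies above its chords.

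The main obstacle is making the BSC decomposition rigorous under Gallager's general symmetry definition rather than the usual involution-based definition. Specifically, one must justify that $C(W)$ and the TVD both split as convex combinations over output blocks. For $C(W)$ this uses that the block index is a function of $y$ and is independent of $x$. If this becomes awkward, I would instead prove the pointwise inequality directly: for each conjugate pair $\{y,\pi(y)\}$, the contribution to $I$ equals its weight times $1-h(\epsilon_y)$, and its contribution to the TV equals its weight times $1-2\epsilon_y$. Summing over pairs then gives the bound.
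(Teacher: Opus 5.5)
Your proof is correct. The paper has no argument of its own here. It cites Arikan's Appendix A, where $I(W)\le \frac12\sum_y\abs{W(y|0)-W(y|1)}\le\sqrt{1-Z(W)^2}$ is shown for an arbitrary binary-input channel, and it leaves implicit that $T(W)$ equals this variational distance. Your computation $T(W,p_{\bar{\X}})=\mathbb{V}(W(\cdot|0),W(\cdot|1))$, combined with Lemma~\ref{lemma_TVD_maximized}, supplies exactly that missing link. In fact, for binary input a one-line computation gives $T(W,p_{\X})=4p_{\X}(0)p_{\X}(1)\,\mathbb{V}(W(\cdot|0),W(\cdot|1))$, so the identification holds without any symmetry. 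Your Cauchy--Schwarz step is a clean, self-contained proof of the upper bound.

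Where you depart from the cited route is the lower bound. You use symmetry to split $W$ into BSCs. This is valid: within a Gallager block every column is $(a_k,b_k)$ or $(b_k,a_k)$, and condition 1 forces equal block mass under both inputs, so the block index is independent of $\X$. It is, however, more than you need, and the \emph{main obstacle} you flag disappears if you work output letter by output letter, as in Arikan's appendix. Write $\theta_y=W(y|0)/(W(y|0)+W(y|1))$. Then the $y$-th term of $I(p_{\bar{\X}},W)$ is $p_{\bar{\Y}}(y)\,(1-h(\theta_y))$, and the $y$-th term of the TVD is $p_{\bar{\Y}}(y)\abs{1-2\theta_y}$. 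Your scalar inequality therefore gives $I(p_{\bar{\X}},W)\le\mathbb{V}(W(\cdot|0),W(\cdot|1))$ for every binary-input channel, with no pairing or involution. Symmetry then enters only through $C(W)=I(p_{\bar{\X}},W)$. For this direction you do not even need Lemma~\ref{lemma_TVD_maximized}, since $T(W)\ge T(W,p_{\bar{\X}})$ holds by definition.
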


\subsection{Channel polarization}
Let $n=2^m$. A polar-like code of parameters $(n,k)$ is specified by a pair $(G_n,\mathcal{A})$, where $G_n$ is an invertible $n \times n$ binary matrix, and $\mathcal{A} \subseteq \dsb{1}{n}$ is a set of information bits.  
Given an input $\M^k \in \{0,1\}^k$, the encoder sets $\U_{\mathcal{A}}=\M^k$ and $\U_{\dsb{1}{n} \setminus \mathcal{A}}=0^{n-k}$, and transmits $\X^n=G_n \U^n$.
In the case of polar codes, $G_n=G_2^{\otimes m}$, where $G_2=\begin{pmatrix} 1 & 0 \\ 1 & 1 \end{pmatrix}$ \cite{arikan2009}. 

Given a BMS channel $W: \{0,1\} \to \mathcal{Y}$, $\forall i=1,\ldots,n$ we define the $i$-th \emph{bit-channel} $W_n^{(i)}: \{0,1\} \to \mathcal{Y}^n \times \{0,1\}^{i-1}$ with transition probabilities 
$$W_n^{(i)}(y^n,u_1,\ldots,u_{i-1}|u_i)=\sum_{u_{i+1},\ldots,u_n} \frac{1}{2^{n-1}} W^n(y^n|G_n u^n).$$
To simplify notation, we will write $W^{(i)}$ instead of $W_n^{(i)}$ when $n$ is fixed. 
We say that $G_n$ is \emph{polarizing} over $W$ if the capacities $C(W^{(i)})$
are close to either $0$ or $1$ for almost all $i \in \{1,2,...,n\}$ as $n \rightarrow \infty$ \cite{li2022adjacent}. 


\subsection{Polarization of TVDs}
 Channel polarization under a TV metric
was studied in \cite{alsan2014re}, which showed that 
for a BMS
channel $W$, 
given an i.i.d. process $\{B_i\}$ with Bernoulli(1/2) distribution, if we define $T_0=T(W)$, and $\forall m\geq 0$,
$T_{m+1}= T(W_m^-)$ if $B_m=1$, and $T_{m+1}= T(W_m^+)$ if $B_m=0$,
where $W^+$ and $W^-$ denote Arikan's transforms \cite[eq. (17)-(18)]{arikan2009},
then $T_m$ is a bounded supermartingale in the interval $[0,1]$ and converges almost surely to $\{0,1\}$ \cite[Proposition 5.2]{alsan2014re}. Namely, the TVDs of the bit-channels polarize. Moreover, $T(W^-)=T(W)^2$, $T(W^{-})+T(W^{+}) \leq 2T(W)$ and 
$\sum_{i=1}^n T(W_n^{(i)}) \leq n T(W)$. Furthermore, the following result holds \cite[Proposition 5.8]{alsan2014re}:
\begin{prop} 
For any BMS
channel $W$, $\forall \beta \leq 1/2$,
$\lim_{m \to \infty} \mathbb{P}\{T_m < 2^{-2^{m\beta}}\}=1-C(W)$.
\end{prop}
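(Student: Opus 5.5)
The plan is to run the Ar\i kan--Telatar bootstrapping argument on the process $\{T_m\}$, with the roles of the two branches exchanged compared to the Bhattacharyya process. Here the minus transform is the good one, since $T(W^-)=T(W)^2$, and the plus transform is the bad one, with $T(W^+)\le 2T(W)-T(W)^2\le 2T(W)$. So $T_m$ obeys exactly the recursion satisfied by $Z_m$ in \cite{arikan2009}, with $B_m=1$ playing the role of the squaring branch:
\[
T_{m+1}=T_m^2 \ \text{if } B_m=1,\qquad T_{m+1}\le 2T_m \ \text{if } B_m=0 .
\]

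The first step is to identify the limit mass at $0$. We know that $T_m$ is a bounded supermartingale converging almost surely to $\{0,1\}$. From Lemma~\ref{Arikan_TVD_bound}, $T(W_n^{(i)})\ge C(W_n^{(i)})$. The capacities polarize, and the fraction of indices with $C\to 1$ equals $C(W)$. Conversely, $T\le\sqrt{1-Z^2}$ together with Ar\i kan's $Z$-polarization shows that $T_m\to 0$ exactly on the event $Z_m\to 1$, which has probability $1-C(W)$. Hence $\mathbb{P}\{T_m\to 0\}=1-C(W)$. This also gives the upper bound $\limsup_m \mathbb{P}\{T_m<2^{-2^{m\beta}}\}\le 1-C(W)$ for every $\beta$.

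The second step proves the matching lower bound by the two-stage argument.
\begin{itemize}
\item[(a)] Fix $\zeta>0$ and choose $m_0$ so that $\mathbb{P}\{T_{m_0}<\zeta\}\ge 1-C(W)-\epsilon$.
\item[(b)] Use a Chernoff bound on a block of $\sqrt{m}$ steps, starting from a level $\zeta$ small enough that the doublings cannot undo the squarings. This drives $T$ down to $2^{-\Omega(m)}$ with high probability.
\item[(c)] Pass to logarithms. On the remaining $m-m_1$ steps, $\log_2(1/T)$ at least doubles on every minus step and loses at most $1$ on every plus step.
\item[(d)] Since $\log_2(1/T)$ is already of order $m$, the losses are negligible. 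The law of large numbers gives at least $(1/2-\eta)(m-m_1)$ squaring steps.
\end{itemize}
This yields $\log_2(1/T_m)\ge 2^{(1/2-\eta')m}$ with probability at least $1-C(W)-2\epsilon$, which proves the statement for every $\beta<1/2$.

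The main obstacle is the endpoint $\beta=1/2$, which the statement explicitly includes. There the law of large numbers is not enough. The number $S_m$ of squaring steps has fluctuations $\pm\Theta(\sqrt m)$ around $m/2$, and the event $T_m<2^{-2^{m/2}}$ essentially requires $S_m\ge m/2-O(1)$ beyond the initial stage. What I would try first is a finer comparison: show that the doubling steps actually contribute a positive linear gain in $\log\log(1/T)$ once $T$ is tiny. Concretely, I would sharpen $T(W^+)\le 2T-T^2$ to show that $\log_2\log_2(1/T_m)-S_m$ has a positive drift rather than $0$. Combined with the CLT for $S_m$, this would push the threshold past $m/2$ and give probability tending to $1-C(W)$ at $\beta=1/2$.

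I expect this sharpening to be the crux, and it may fail. For the BEC, $T_m=1-\epsilon_m$ evolves by $T\mapsto T^2$ and $T\mapsto 2T-T^2\approx 2T$. In that case there is no extra gain, and the CLT would bound the endpoint probability away from $1-C(W)$. If that happens, the honest conclusion is that the endpoint must be obtained directly from \cite[Proposition 5.8]{alsan2014re} under its stated hypotheses, rather than from this argument.
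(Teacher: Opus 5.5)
For $\beta<1/2$ your plan works. It is the standard Ar\i kan--Telatar bootstrap applied to $T_m$ with the branches swapped. You could simply invoke their general theorem for $[0,1]$-valued processes that square on one branch, grow by at most a constant factor on the other, and converge a.s.\ to $\{0,1\}$. Your identification $\mathbb{P}\{T_\infty=0\}=1-C(W)$ via $C\le T\le\sqrt{1-Z^2}$ is also correct. The paper gives no proof of its own; it imports the statement from \cite{alsan2014re}. The gap is your last paragraph. The endpoint $\beta=1/2$ cannot be ``obtained directly from'' that citation, because it is false for every BMS channel with $0<C(W)<1$. The obstruction you found for the BEC is not special to the BEC.

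$W^+$ is upgraded with respect to $W$, since marginalizing $(y_1,y_2,u_1)$ onto $y_2$ recovers $W$. So Lemma~\ref{lemma_TVD_degraded} gives $T(W^+)\ge T(W)$, while $T(W^-)=T(W)^2$ exactly. Let $S'_m=\sum_{j<m}B_j$ be the number of minus steps. Then for any $W$ with $0<T(W)<1$, the quantity $\log_2\log_2(1/T_m)-S'_m$ is nonincreasing in $m$. The positive drift you hoped to get from a sharper bound on $T(W^+)$ therefore cannot exist for any channel. Consequently
\[
\mathbb{P}\{T_m<2^{-2^{m/2}}\}\le\mathbb{P}\{S'_m>m/2-\log_2\log_2(1/T_0)\}\longrightarrow\tfrac12 .
\]
You can sharpen this by splitting on $\{T_{m_0}<\zeta\}$, using that $B_{m_0},\dots,B_{m-1}$ are independent of $T_{m_0}$, and using the a.s.\ convergence of $T_m$. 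This gives $\limsup_m\mathbb{P}\{T_m<2^{-2^{m/2}}\}\le(1-C(W))/2<1-C(W)$. The right conclusion of your proposal is that the proposition holds only for $\beta<1/2$: the ``$\le$'' should be ``$<$'', as in Ar\i kan--Telatar. That is all Proposition~\ref{prop_asymptotic} uses, since there $\beta\in(0,1/2)$.
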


\section{Secrecy bounds under a TVD metric} \label{section_wiretap}

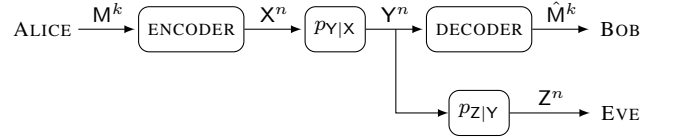
\begin{figure}[htb]
\begin{footnotesize}
\noindent\makebox[0.5\textwidth]{
\begin{tikzpicture}[
nodetype1/.style={
	rectangle,
	rounded corners,
	minimum width=8mm,
	minimum height=6mm,
	draw=black
},
nodetype2/.style={
	rectangle,
	rounded corners,
	minimum width=14mm,
	minimum height=6mm,
    draw=black
},
tip2/.style={-latex,shorten >=0.4mm}
]
\matrix[row sep=0.5cm, column sep=0.8cm, ampersand replacement=\&]{
\node (Alice) {\textsc{Alice}};  \& \node (encoder) [nodetype2]   {\textsc{encoder}}; \&
\node (W) [draw, nodetype1, text centered]  {$p_{\Y|\X}$}; \&
\node (decoder) [nodetype2] {\textsc{decoder}}; \& 
\node (Bob) {\textsc{Bob}};\\
\& (invisible) \& \&
\node (We) [draw, nodetype1, text centered]  {$p_{\Z|\Y}$}; \& 
\node (Eve) {\textsc{Eve}}; \\};
\draw[->] (Alice) edge[tip2] node [above] {$\M^k$} (encoder) ;
\draw[->] (encoder) edge[tip2] node [above] (X) {$\X^n$} (W) ;
\draw[->] (W) edge[tip2] node [above] (Y) {$\Y^n$} (decoder) ;
\draw[->] (decoder) edge[tip2] node [above] {$\hat{\M}^k$} (Bob) ;
\draw[->] (We) edge[tip2] node [above] {$\Z^n$} (Eve) ;
\draw[->,>=latex] (Y) |- node [anchor=east] {} (We);
\end{tikzpicture}}
\caption{The degraded wiretap channel.}
 \label{figure_degraded}
\end{footnotesize}
\end{figure}

We consider the 
degraded wiretap channel depicted in Figure \ref{figure_degraded}. We assume that Bob's channel $W_b : \mathcal{X} \to \mathcal{Y}$  and Eve's channel $W_e: \mathcal{X} \to \mathcal{Z}$, of transition probabilities $p_{\Y|\X}$ and $p_{\Z|\X}$ respectively, are both BMS channels.
The confidential message $\M^k$ is assumed to be uniform over $\{0,1\}^k$.

\subsection{Second order bounds for the secrecy rate}
We consider the average TVD $S(\M^k|\Z^n)$ between message $\M^k$ and Eve's observation $\Z^n$ as a measure of leakage \cite{yang2019wiretap}: 
\begin{equation} \label{average_TVD}
S(\M^k|\Z^n)=\mathbb{V}(p_{\M^k\Z^n},p_{\M^k}p_{\Z^n}).
\end{equation}
For this model, the secrecy capacity 
is 
$C_s=C(W_b)-C(W_e)$. In finite blocklength, \cite[Theorem 13]{yang2019wiretap} showed that for $\epsilon+\delta<1$, the maximal secrecy rate $R^*(n,\epsilon,\delta)$ for blocklength $n$, and average error probability $\epsilon$ under the secrecy constraint 
\begin{equation} \label{secrecy_constraint}
S(\M^k|\Z^n)\leq \delta
\end{equation}
is upper and lower bounded by
\begin{align} 
&\!\!\!\!\!R^*(n,\epsilon,\delta) \leq C_s -\sqrt{\frac{V_c}{n}}Q^{-1}\left(\epsilon+\delta\right)+\mathcal{O}\left(\frac{\log n}{n}\right),\label{YSP_Theorem13_upper} \\
&\!\!\!\!\!R^*(n,\epsilon,\delta)\! \geq \!C_s \!-\!\sqrt{\frac{V_b}{n}}Q^{-1}\!(\epsilon)\!-\!\sqrt{\frac{V_e}{n}}Q^{-1}\!(\delta)\!+\!\mathcal{O}\!\left(\!\frac{\log n}{n}\!\right)\!\! \label{YSP_Theorem13_lower}
\end{align}%
 where 
 $Q$ denotes the Q-function, $V_b$ and $V_e$ are the dispersions of Bob and Eve's channels, and
 \begin{multline}\label{V_c}
 \!\!\!\!\!\!V_c=\!\sum_{x \in \mathcal{X}} p_{\bar{\X}}(x) \!\Bigg(\!\sum_{y,z} p_{\Z\Y|\X}(z,y|x) \!\left(\!\log \frac{p_{\Z \Y|\X}(z,y|x)}{p_{\Z|\X}(z|x)p_{\Y|\Z}(y|z)}\!\right)^2 \\
 -\mathbb{D}\left(p_{\Z\Y|\X=x}||p_{\Y|\Z}p_{\Z|\X=x}\right)^2\Bigg),
 \end{multline}
provided that the uniform distribution $p_{\bar{\X}}$ is the unique distribution that maximizes $\mathbb{I}(\X;\Y|\Z)$
\cite{leung2003special}. 


\subsection{Wiretap coding scheme} \label{section_wiretap_polar}
We consider the general wiretap coding scheme in \cite{Mahdavifar_Vardy}. 
The set $\llbracket 1; n \rrbracket$ is partitioned into three disjoint subsets $\mathcal{A} \cup \mathcal{R} \cup \mathcal{B}$, where $\abs{\mathcal{A}}=k$ and $\abs{\mathcal{R}}=r$. Intuitively, $\mathcal{A}$ corresponds to the indices of bit-channels that are good for Bob but bad for Eve, $\mathcal{B}$ to bit-channels that are bad for both, and $\mathcal{R}$ to bit-channels that are good for both. 
Given the confidential message $\M^k \in \{0,1\}^k$ and a vector $\mathsf{V}^{r}$ of uniformly random bits, the wiretap encoder is defined by setting $\X^n=G_n \U^n$, where 
\begin{equation} \label{wiretap_encoder}
\U_{\mathcal{B}}=0^{n-k-r}, \; \U_{\mathcal{A}}=(\U_{i_1},\ldots,\U_{i_k})=\M^k, \; \U_{\mathcal{R}}=\mathsf{V}^{r}.
\end{equation}
The \emph{secrecy rate} of the wiretap code is $R_s=k/n$.

\subsection{Bound for the average error probability} 
Bob's block error probability under SC decoding is upper-bounded by the sum of Bhattacharyya parameters of Bob's unfrozen bit-channels in $\mathcal{G}=\mathcal{A} \cup \mathcal{R}$ \cite{Arikan_Telatar,fazeli2020binary}:
\begin{equation} \label{Bhattacharyya_bound}
P_e \leq \sum_{i \in \mathcal{A} \cup \mathcal{R}} Z(W_b^{(i)})
\end{equation}
\begin{rem}
Bob does not decode the bit-channels with indices greater than $i_{\max}(\mathcal{A})=\max\{j \in \mathcal{A}\}$, so that in (\ref{Bhattacharyya_bound}), the sum can be taken over $\mathcal{A} \cup \mathcal{R}'$, $\mathcal{R}'=\{i \in \mathcal{R}:\; i\leq i_{\max}(\mathcal{A})\}$. 
\end{rem}

\subsection{Bounds for the leakage} 
First, we prove a preliminary lemma (see Bound 1 in \cite{shakiba2021finite}): 
\begin{lemma} \label{Lemma_Mahdi}
If $\M^k$ is uniformly distributed in $\{0,1\}^k$, 
$$\mathbb{V}(p_{\M^k\Z^n},p_{\M^k}p_{\Z^n})\leq \sum_{i=1}^k \mathbb{V}(p_{\M^i\Z^n},\frac{1}{2}p_{\M^{i-1}\Z^n})$$
\end{lemma}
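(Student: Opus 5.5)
We prove the statement by a telescoping (hybrid) argument combined with the triangle inequality for the TVD. For $i=0,\ldots,k$ define the hybrid distributions on $\{0,1\}^k \times \mathcal{Z}^n$
$$q_i(m^k,z^n)=p_{\M^i\Z^n}(m^i,z^n)\,2^{-(k-i)},$$
that is, the first $i$ message bits keep their true joint law with $\Z^n$, and the remaining $k-i$ bits are replaced by independent uniform bits. Then $q_k=p_{\M^k\Z^n}$. Also $q_0(m^k,z^n)=2^{-k}p_{\Z^n}(z^n)=p_{\M^k}(m^k)p_{\Z^n}(z^n)$, which is where the assumption that $\M^k$ is uniform is used. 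The triangle inequality then gives
$$\mathbb{V}(p_{\M^k\Z^n},p_{\M^k}p_{\Z^n})\leq \sum_{i=1}^k \mathbb{V}(q_i,q_{i-1}).$$

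It remains to identify each term. We have $q_{i-1}(m^k,z^n)=p_{\M^{i-1}\Z^n}(m^{i-1},z^n)\cdot\frac12\cdot 2^{-(k-i)}$. Hence $q_i$ and $q_{i-1}$ share the common factor $2^{-(k-i)}$ on the coordinates $m_{i+1}^k$, so
$$\mathbb{V}(q_i,q_{i-1})=\frac12\sum_{m^i,z^n}\Big|p_{\M^i\Z^n}(m^i,z^n)-\tfrac12 p_{\M^{i-1}\Z^n}(m^{i-1},z^n)\Big|\sum_{m_{i+1}^k}2^{-(k-i)}.$$
The inner sum equals $1$. The result is exactly $\mathbb{V}(p_{\M^i\Z^n},\frac12 p_{\M^{i-1}\Z^n})$, where $\frac12 p_{\M^{i-1}\Z^n}$ is viewed as a distribution on $(m^i,z^n)$ that is constant in $m_i$.

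The only delicate point is the factorization step: we must check that the TVD is invariant under tensoring both arguments with the same independent uniform component. This follows directly from the $\ell_1$ expression above, so no real obstacle is expected. Everything else is bookkeeping of the hybrid endpoints.
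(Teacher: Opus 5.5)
Your proposal is correct and follows essentially the same route as the paper. Your hybrids $q_i$ are exactly the paper's $f_i$, and the paper likewise telescopes, applies the triangle inequality, and identifies each term by summing out the uniform coordinates $m_{i+1}^k$.
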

\begin{IEEEproof}
Letting $f_i(m^k,z^n)=\frac{1}{2^{k-i}} p_{\M^i\Z^n}(m^i,z^n)$ for $i=0,\ldots,k$, we can write
{\allowdisplaybreaks
\begin{align}
&\mathbb{V}(p_{\M^k\Z^n},p_{\M^k}p_{\Z^n}) \notag\\
&=\frac{1}{2} \sum_{m^k, z^n} \abs{f_k(m^k,z^n)-f_0(m^k,z^n)} \notag\\
&=\frac{1}{2} \sum_{m^k,z^n} \abs{ \sum_{i=1}^k (f_i(m^k,z^n)-f_{i-1}(m^k,z^n))} \notag\\
& \leq \frac{1}{2} \sum_{i=1}^k \sum_{m^k,z^n} \abs{f_i(m^k,z^n)-f_{i-1}(m^k,z^n)} \notag\\
&=\sum_{i=1}^k \mathbb{V}(p_{\M^i\Z^n},\frac{1}{2}p_{\M^{i-1}\Z^n}). \tag*{\IEEEQED}
\end{align}
}
\let\IEEEQED\relax
\end{IEEEproof}

\begin{lemma} \label{lemma_bounds}
Let $\Z^n$ be the output of the wiretap encoder with input $\U^n$ given by  (\ref{wiretap_encoder}) through the eavesdropper's channel $W_e$, and $\bar{\Z}^n$ the output of the polar encoder with uniform input $\bar{\U}^n$, and let $\mathcal{A} \cup \mathcal{B}=\{i_1,\ldots,i_{n-r}\}$. Then 
the following bounds hold for the leakage in TVD:
{\allowdisplaybreaks
\begin{align} 
S(\M^k|\Z^n) 
&\leq \frac{1}{2} \sum_{j=1}^{n-r} T(p_{\bar{\Z}^n\bar{\U}_{i_1}\bar{\U}_{i_2}\cdots\bar{\U}_{i_{j-1}}|\bar{\U}_{i_j}})
\tag*{(Bound 1)}\\
&\leq
\frac{1}{2}\sum_{i \in \mathcal{A}\cup \mathcal{B}} T(W_e^{(i)}). 
\tag*{(Bound 2)}
\end{align}
}
\end{lemma}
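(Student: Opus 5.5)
The plan is to reduce the leakage to a hybrid sum, relate each term to a bit-channel evaluated with uniform inputs, and then remove extra conditioning by degradation.

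\emph{Step 1: a uniform-input reformulation.} Replace the message by the full vector of bits indexed by $\mathcal{A}\cup\mathcal{B}$. Let $\bar{\U}^n$ be uniform on $\{0,1\}^n$. Then conditioning on $\bar{\U}_{\mathcal{B}}=0$ reproduces the wiretap encoder (\ref{wiretap_encoder}), with $\bar{\U}_{\mathcal{R}}$ playing the role of $\mathsf{V}^r$. Hence $p_{\M^k\Z^n}=p_{\bar{\U}_{\mathcal{A}}\bar{\Z}^n|\bar{\U}_{\mathcal{B}}=0}$. By the symmetry of $W_e$ and the linearity of $G_n$, the quantity $\mathbb{V}(p_{\bar{\U}_{\mathcal{A}}\bar{\Z}^n|\bar{\U}_{\mathcal{B}}=u_{\mathcal{B}}},\, p_{\bar{\U}_{\mathcal{A}}}p_{\bar{\Z}^n|\bar{\U}_{\mathcal{B}}=u_{\mathcal{B}}})$ does not depend on $u_{\mathcal{B}}$. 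Indeed, shifting $u_{\mathcal{B}}$ shifts $\X^n$ by a fixed codeword, and this is absorbed by the output permutations $\pi_{x\bar x}$ acting coordinatewise on $\Z^n$. Averaging over $u_{\mathcal{B}}$ then gives
$$S(\M^k|\Z^n)=\mathbb{V}\bigl(p_{\bar{\U}_{\mathcal{A}}\bar{\U}_{\mathcal{B}}\bar{\Z}^n},\,p_{\bar{\U}_{\mathcal{A}}}p_{\bar{\U}_{\mathcal{B}}\bar{\Z}^n}\bigr)\le \mathbb{V}\bigl(p_{\bar{\U}_{\mathcal{A}\cup\mathcal{B}}\bar{\Z}^n},\,p_{\bar{\U}_{\mathcal{A}\cup\mathcal{B}}}p_{\bar{\Z}^n}\bigr).$$
The last inequality is the main obstacle. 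I would prove it by a hybrid argument in the spirit of Lemma~\ref{Lemma_Mahdi}. Introduce the intermediate distribution $p_{\bar{\U}_{\mathcal{A}}}p_{\bar{\U}_{\mathcal{B}}}p_{\bar{\Z}^n}$ and apply the triangle inequality. Then observe that the first leg plus the leg from $p_{\bar{\U}_{\mathcal{A}}}p_{\bar{\U}_{\mathcal{B}}\bar{\Z}^n}$ to the full product is exactly what the telescoping decomposition below produces anyway. So I would go directly to Step 2, skipping the intermediate inequality, and order the bits so that those in $\mathcal{A}$ are handled consistently.

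\emph{Step 2: telescoping, giving Bound 1.} Order $\mathcal{A}\cup\mathcal{B}=\{i_1<\dots<i_{n-r}\}$. Apply the telescoping argument of Lemma~\ref{Lemma_Mahdi} to the uniform vector $(\bar{\U}_{i_1},\dots,\bar{\U}_{i_{n-r}})$ together with $\bar{\Z}^n$. This bounds the total by
$$\sum_j \mathbb{V}\bigl(p_{\bar{\U}_{i_1}\cdots\bar{\U}_{i_j}\bar{\Z}^n},\,\tfrac12 p_{\bar{\U}_{i_1}\cdots\bar{\U}_{i_{j-1}}\bar{\Z}^n}\bigr).$$
Each summand equals $\mathbb{V}(p_{\bar{\U}_{i_j}Y_j},p_{\bar{\U}_{i_j}}p_{Y_j})$ with $Y_j=(\bar{\Z}^n,\bar{\U}_{i_1},\dots,\bar{\U}_{i_{j-1}})$, since $\bar{\U}_{i_j}$ is uniform. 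By the definition of $T(\cdot,p_{\X})$ this is $\frac12 T(p_{Y_j|\bar{\U}_{i_j}},p_{\bar{\X}})$. The channel $p_{Y_j|\bar{\U}_{i_j}}$ is symmetric, because it is obtained by linear operations from the BMS channel $W_e$ with uniform auxiliary bits. Lemma~\ref{lemma_TVD_maximized} therefore identifies this value with $\frac12 T(p_{Y_j|\bar{\U}_{i_j}})$, which is Bound 1.

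\emph{Step 3: Bound 2 via degradation.} Let $\mathcal{S}_j=\{1,\dots,i_j-1\}\supseteq\{i_1,\dots,i_{j-1}\}$. The bit-channel $W_e^{(i_j)}$ outputs $(\bar{\Z}^n,\bar{\U}_{\mathcal{S}_j})$, and discarding the coordinates in $\mathcal{S}_j\setminus\{i_1,\dots,i_{j-1}\}$ is a channel $\mathcal{P}$. Hence $p_{Y_j|\bar{\U}_{i_j}}\preccurlyeq W_e^{(i_j)}$, with the same uniform input. Lemma~\ref{lemma_TVD_degraded} gives $T(p_{Y_j|\bar{\U}_{i_j}})\le T(W_e^{(i_j)})$, using that both channels are symmetric, so that both maxima are attained at the uniform input. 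Summing over $j$ yields Bound 2.

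The delicate points are the symmetry and independence-of-$u_{\mathcal{B}}$ argument in Step 1, and checking the symmetry of the intermediate channels so that Lemma~\ref{lemma_TVD_maximized} applies.
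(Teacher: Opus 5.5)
Your Steps 2 and 3 are the paper's steps (c) and (d): telescoping with Lemma~\ref{Lemma_Mahdi} in the natural order, then degradation by discarding the $\mathcal{R}$-coordinates of $\bar{\U}_1^{i_j-1}$. The gap is in Step 1, and the workaround you sketch does not close it. The inequality
\[
\mathbb{V}\bigl(p_{\bar{\U}_{\mathcal{A}}\bar{\U}_{\mathcal{B}}\bar{\Z}^n},p_{\bar{\U}_{\mathcal{A}}}p_{\bar{\U}_{\mathcal{B}}\bar{\Z}^n}\bigr)\le\mathbb{V}\bigl(p_{\bar{\U}_{\mathcal{A}\cup\mathcal{B}}\bar{\Z}^n},p_{\bar{\U}_{\mathcal{A}\cup\mathcal{B}}}p_{\bar{\Z}^n}\bigr)
\]
is the TV analogue of $\mathbb{I}(A;BZ)\le\mathbb{I}(AB;Z)$ for independent $A,B$. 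TV has no chain rule, and the inequality is false in general. Take $A$ uniform on four values, $B$ an independent uniform bit, and binary $Z$ with $P(Z=1|a,b)=0.4$ for $a\le 3$, $P(Z=1|4,0)=0$, $P(Z=1|4,1)=1$. Then the left side is $0.1875$ and the right side is $0.14375$.

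Your triangle inequality through $p_{\bar{\U}_{\mathcal{A}}}p_{\bar{\U}_{\mathcal{B}}}p_{\bar{\Z}^n}$ yields the right-hand side \emph{plus} $\mathbb{V}(p_{\bar{\U}_{\mathcal{B}}\bar{\Z}^n},p_{\bar{\U}_{\mathcal{B}}}p_{\bar{\Z}^n})$. The Step 2 telescoping does not absorb this second leg, because it only bounds the distance to the full product. You would therefore end with Bound 1 plus an extra frozen-bit leakage term. Telescoping the left side directly with $\mathcal{B}$ first is formally valid. But then every $\mathcal{A}$-term has all of $\bar{\U}_{\mathcal{B}}$ in its output, including frozen bits with larger indices, so it is neither the term of Bound 1 nor degraded with respect to $W_e^{(i)}$.

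The missing idea is the paper's step (b). Since $\U_{\mathcal{B}}$ is deterministic,
\[
S(\M^k|\Z^n)=\mathbb{V}(p_{\U_{\mathcal{A}\cup\mathcal{B}}\Z^n},p_{\U_{\mathcal{A}\cup\mathcal{B}}}p_{\Z^n})=\tfrac12 T(Q_n(W_e,\mathcal{R}),p_{\U_{\mathcal{A}\cup\mathcal{B}}}).
\]
This is the TV information of the channel $\U_{\mathcal{A}\cup\mathcal{B}}\to\Z^n$, with $\U_{\mathcal{R}}$ uniform and marginalized, evaluated at the input law that is uniform on $\mathcal{A}$ and a point mass on $\mathcal{B}$. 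This $2^{n-r}$-input channel is symmetric \cite[Proposition 13]{Mahdavifar_Vardy}. Lemma~\ref{lemma_TVD_maximized} then gives $S(\M^k|\Z^n)\le\frac12 T(Q_n(W_e,\mathcal{R}))=\mathbb{V}(p_{\bar{\U}_{\mathcal{A}\cup\mathcal{B}}\bar{\Z}^n},p_{\bar{\U}_{\mathcal{A}\cup\mathcal{B}}}p_{\bar{\Z}^n})$, after which your Steps 2 and 3 apply unchanged.

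Your shift-invariance argument is correct but does not replace this step. It only shows that $T(Q_n(W_e,\mathcal{R}),\cdot)$ takes the same value at every input law that is uniform on $\mathcal{A}$ and deterministic on $\mathcal{B}$. Comparing that value with the uniform-input value is precisely the content of Lemma~\ref{lemma_TVD_maximized} for a multi-input channel. Read as a channel from $(A,B)$ to $Z$, the example above shows that without symmetry a non-uniform input can give strictly larger TV information. By contrast, your appeals to Lemma~\ref{lemma_TVD_maximized} in Steps 2 and 3 concern binary-input channels. There $T(W,p_{\X})=4p_{\X}(0)p_{\X}(1)\mathbb{V}(W(\cdot|0),W(\cdot|1))$, so they are immediate.
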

\begin{IEEEproof}
Let $\mathcal{A} \cup \mathcal{B}=\{i_1,\ldots,i_{n-r}\}$. We have
{\allowdisplaybreaks
\begin{align*} 
S(\M^k|\Z^n) 
&=\mathbb{V}(p_{\U_{\mathcal{A}}\Z^n},p_{\U_{\mathcal{A}}}p_{\Z^n}) \notag \\
&\stackrel{(a)}{=}\mathbb{V}(p_{\U_{\mathcal{A}\cup \mathcal{B}}\Z^n},p_{\U_{\mathcal{A}\cup \mathcal{B}}}p_{\Z^n}) \notag\\
&\stackrel{(b)}{\leq} \mathbb{V}(p_{\bar{\U}_{\mathcal{A}\cup \mathcal{B}}\bar{\Z}^n},p_{\bar{\U}_{\mathcal{A}\cup \mathcal{B}}}p_{\bar{\Z}^n}) \notag\\
&=\frac{1}{2}T(p_{\bar{\Z}^n|\bar{\U}_{\mathcal{A}\cup \mathcal{B}}}) \notag\\
&\stackrel{(c)}{\leq} \sum_{j=1}^{n-r} \mathbb{V}\left(p_{\bar{\U}_{i_1}\bar{\U}_{i_2}\cdots\bar{\U}_{i_j}\bar{\Z}^n},p_{\bar{\U}_{i_j}}p_{\bar{\U}_{i_1}\bar{\U}_{i_2}\cdots\bar{\U}_{i_{j-1}}\bar{Z}^n}\right) \notag\\
&= \frac{1}{2} \sum_{j=1}^{n-r} T(p_{\bar{\Z}^n\bar{\U}_{i_1}\bar{\U}_{i_2}\cdots\bar{\U}_{i_{j-1}}|\bar{\U}_{i_j}}), \notag 
\end{align*}
}%
where (a) holds since the bits $\U_{\mathcal{B}}$ in (\ref{wiretap_encoder}) are fixed to zero, (b) holds  by Lemma \ref{lemma_TVD_maximized} since the channel $Q_n(W_e,\mathcal{R})$ with transition probabilities $p_{\bar{\Z}^n|\bar{\U}_{\mathcal{A}\cup \mathcal{B}}}$ is symmetric provided that $W_e$ is symmetric \cite[Proposition 13]{Mahdavifar_Vardy}, and (c) by Lemma \ref{Lemma_Mahdi}.\\
Assuming now that $i_1<i_2 <\ldots < i_{n-r}$, we have
{\allowdisplaybreaks
\begin{align*}
\frac{1}{2} \sum_{j=1}^{n-r} T(p_{\bar{\Z}^n\bar{\U}_{i_1}\bar{\U}_{i_2}\cdots\bar{\U}_{i_{j-1}}|\bar{\U}_{i_j}}) 
&\stackrel{(d)}{\leq} \frac{1}{2} \sum_{i \in \mathcal{A}\cup \mathcal{B}} T(p_{\bar{\Z}^n\bar{\U}_1^{i-1}|\bar{\U}_i}) \notag \\
&=\frac{1}{2}\sum_{i \in \mathcal{A}\cup \mathcal{B}} T(W_e^{(i)}). 
\end{align*}
}%
where (d) holds since the channel $p_{\bar{\Z}^n\bar{\U}_{\llbracket 1; i-1 \rrbracket \cap (\mathcal{A}\cup \mathcal{B})}|\bar{\U}^i}$ is degraded with respect to the channel $p_{\bar{\Z}^n\bar{\U}_1^{i-1}|\bar{\U}^i}$.
\end{IEEEproof}

\begin{rem}
Bounds 1 and 2 in Lemma \ref{lemma_bounds} can be seen as the TV analogue of the upper bounds in \cite[eq. (77)]{Mahdavifar_Vardy} and \cite[Proposition 1]{mahdavifar2024finite} for MI. However, it does not seem possible to prove a lower bound for the TVD leakage analogously to \cite[Proposition 3]{mahdavifar2024finite} due to the fact that TV does not tensorize.
\end{rem}

\section{Wiretap code design}
\subsection{Wiretap code scheme with vanishing total variation leakage}
Using Lemma \ref{lemma_bounds}, we can adapt the wiretap scheme in \cite{Mahdavifar_Vardy} in order to guarantee vanishing TV leakage. 
\begin{prop} \label{prop_asymptotic}
For $\beta \in (0,1/2)$ and $c_1 2^{-n^{\beta}} < \delta_n =o(1/n)$ for some constant $c_1$, define 
\begin{align*}
&\mathcal{G}_n(W_b)=\{i \in \dsb{1}{n}: \; Z(W_b^{(i)})<2^{-n^\beta}\!/n\},\\
&\mathcal{P}_n(W_e)=\{i \in \dsb{1}{n}: \; T(W_e^{(i)})\leq\delta_n\},
\end{align*}
and let
\begin{align*}  
&\mathcal{R}=\dsb{1}{n} \setminus \mathcal{P}_n(W_e),\\
&\mathcal{A}=\mathcal{G}_n(W_b) \cap \mathcal{P}_n(W_e), \\
&\mathcal{B}=\mathcal{P}_n(W_e) \setminus \mathcal{G}_n(W_b).
\end{align*}
As $n \to\infty$, the wiretap scheme (\ref{wiretap_encoder}) with polarizing transform $G_n$ guarantees 
\begin{align*}
&\lim_{n \to \infty} S(\M^k|\Z^n) = 0,\\
&\lim_{n \to \infty} R_s =C(W_b)-C(W_e).
\end{align*}
\end{prop}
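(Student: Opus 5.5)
The proof splits into a secrecy part and a rate part.

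\emph{Secrecy.} By construction $\mathcal{A}\cup\mathcal{B}=\mathcal{P}_n(W_e)$, so Bound 2 of Lemma \ref{lemma_bounds} gives
\begin{equation*}
S(\M^k|\Z^n)\leq \frac{1}{2}\sum_{i\in\mathcal{P}_n(W_e)} T(W_e^{(i)}) \leq \frac{n\delta_n}{2}.
\end{equation*}
This tends to $0$ because $\delta_n=o(1/n)$. Secrecy therefore needs nothing beyond Lemma \ref{lemma_bounds}.

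\emph{Rate.} We have $k=\abs{\mathcal{G}_n(W_b)\cap\mathcal{P}_n(W_e)}$, and we want $k/n\to C(W_b)-C(W_e)$.

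The upper bound is the easy direction. Since $k\leq \abs{\mathcal{G}_n(W_b)}$ and $k\le \abs{\mathcal{P}_n(W_e)}$, it suffices to show two things. First, $\abs{\mathcal{G}_n(W_b)}/n\to C(W_b)$; this follows from standard polarization for the polarizing transform $G_n$, with the $1/n$ factor being harmless for $\beta<1/2$. Second, $\limsup \abs{\mathcal{P}_n(W_e)}/n\le 1-C(W_e)$. For the second point I would use Lemma \ref{Arikan_TVD_bound}: $T(W_e^{(i)})\le\delta_n$ implies $C(W_e^{(i)})\le\delta_n$. The conservation law $\sum_i C(W_e^{(i)})=nC(W_e)$, together with polarization of the $C(W_e^{(i)})$ (almost all lie close to $0$ or $1$), then shows that at most a $1-C(W_e)+o(1)$ fraction of indices can have $C(W_e^{(i)})\le \delta_n$. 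These two bounds alone do not yet give the right difference.

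The correct accounting is an inclusion--exclusion:
\begin{equation*}
k\ge \abs{\mathcal{G}_n(W_b)}-\abs{\dsb{1}{n}\setminus\mathcal{P}_n(W_e)}.
\end{equation*}
Combined with $\abs{\mathcal{G}_n(W_b)}/n\to C(W_b)$, this reduces the lower bound to showing
\begin{equation*}
\abs{\mathcal{P}_n(W_e)}/n\to 1-C(W_e).
\end{equation*}
For the matching upper bound on $k$, degradedness is what I need. Since $W_e\preccurlyeq W_b$, the bit-channels satisfy $W_e^{(i)}\preccurlyeq W_b^{(i)}$. So indices in $\dsb{1}{n}\setminus\mathcal{P}_n(W_e)$ (where Eve's channel is not bad) are almost all good for Bob, and $\mathcal{A}$ essentially equals $\mathcal{G}_n(W_b)$ minus this set. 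Once the size of $\mathcal{P}_n(W_e)$ is known, this gives $k/n\to C(W_b)-C(W_e)$.

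\emph{Main obstacle.} The key step is the lower bound $\liminf\abs{\mathcal{P}_n(W_e)}/n\ge 1-C(W_e)$ at the specific threshold $\delta_n$. This is where the Proposition of Section II (\cite[Proposition 5.8]{alsan2014re}) enters: for any $\beta'\le 1/2$, the fraction of indices with $T(W_e^{(i)})<2^{-n^{\beta'}}$ tends to $1-C(W_e)$. I would apply it with $\beta'=\beta$. The hypothesis $c_1 2^{-n^\beta}<\delta_n$ then gives $\{i: T(W_e^{(i)})<2^{-n^\beta}\}\subseteq\mathcal{P}_n(W_e)$, at least up to the constant $c_1$. If $c_1>1$, I would instead apply the proposition with a slightly smaller $\beta'<\beta$, so that $2^{-n^{\beta'}}$ dominates $c_1 2^{-n^\beta}$ eventually, or absorb the constant directly.

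One caveat: that proposition is stated for Arıkan's kernel $G_2^{\otimes m}$. For a general polarizing $G_n$, I would either restrict the rate claim to polar codes or assume the analogous TVD polarization rate. The remaining steps are routine counting.
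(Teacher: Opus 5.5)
Your architecture matches the paper's: secrecy from Bound 2, achievability from $\abs{\mathcal{A}}\geq\abs{\mathcal{G}_n(W_b)}-\abs{\dsb{1}{n}\setminus\mathcal{P}_n(W_e)}$ plus a lower bound on $\abs{\mathcal{P}_n(W_e)}$, and a converse resting on degradedness.

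The one genuine difference is how you get $\liminf\abs{\mathcal{P}_n(W_e)}/n\geq 1-C(W_e)$. You invoke the Alsan--Telatar TVD rate of polarization recalled in Section II directly. The paper instead passes through Bhattacharyya parameters: it takes $\mathcal{P}_n'=\{i: Z(W_e^{(i)})\geq 1-2^{-n^\alpha}\}$, whose density tends to $1-C(W_e)$ by \cite[Corollary 19]{Mahdavifar_Vardy}, and uses $T\leq\sqrt{1-Z^2}\leq 2^{-(n^\alpha-1)/2}$ from Lemma \ref{Arikan_TVD_bound}. Your route is shorter and stays in the TV metric. Both depend on Arikan's kernel in the same way, so your caveat applies equally to the paper's proof.

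Two steps need repair, though.

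First, your handling of $c_1$ is backwards. If $c_1\geq 1$, nothing is needed, since $2^{-n^\beta}\leq c_1 2^{-n^\beta}<\delta_n$. The delicate case is $c_1<1$, and there you need $\beta'\in(\beta,1/2)$, so that $2^{-n^{\beta'}}<c_1 2^{-n^\beta}<\delta_n$ for large $n$. With $\beta'<\beta$, the threshold $2^{-n^{\beta'}}$ eventually exceeds $c_1 2^{-n^\beta}$. Then, when $\delta_n$ is close to its lower limit, the set $\{i: T(W_e^{(i)})<2^{-n^{\beta'}}\}$ need not lie in $\mathcal{P}_n(W_e)$. This is exactly why $\beta<1/2$ is required strictly, and why the paper picks $\alpha\in(\beta,1/2)$.

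Second, in the converse, degradedness (Lemma \ref{lemma_TVD_degraded}) only yields $\dsb{1}{n}\setminus\mathcal{P}_n(W_e)\subseteq\dsb{1}{n}\setminus\mathcal{P}_n(W_b)$. To conclude that this set lies in $\mathcal{G}_n(W_b)$ up to $o(n)$ indices, you need two further facts:
\begin{itemize}
\item $\liminf\abs{\mathcal{P}_n(W_b)}/n\geq 1-C(W_b)$, which is your main-obstacle argument run for Bob's channel;
\item $\mathcal{G}_n(W_b)\cap\mathcal{P}_n(W_b)=\varnothing$ for large $n$.
\end{itemize}
The second fact is the missing idea. It needs an inequality tying $Z$ to $T$, such as $T(W)\geq C(W)\geq\log\frac{2}{1+Z(W)}$, which is what the paper uses. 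Without it, ``tiny Bhattacharyya parameter'' and ``TVD at most $\delta_n$'' are unrelated conditions, and your counting does not close.

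With both facts, $\mathcal{A}\subseteq\mathcal{P}_n(W_e)\setminus\mathcal{P}_n(W_b)$, so $\abs{\mathcal{A}}\leq\abs{\mathcal{P}_n(W_e)}-\abs{\mathcal{P}_n(W_b)}$. Your conservation-law argument then finishes the job. Incidentally, it needs no polarization: $\sum_i C(W_e^{(i)})=nC(W_e)$ alone gives $\abs{\mathcal{P}_n(W_e)}\leq n(1-C(W_e))/(1-\delta_n)$. Combining these yields $\limsup R_s\leq C(W_b)-C(W_e)$.
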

\begin{rem} The TV leakage constraint can be seen as intermediate between the weak secrecy and the strong secrecy constraint \cite[Proposition 1]{Bloch_Laneman}. Although the strong secrecy scheme in \cite{Mahdavifar_Vardy} directly guarantees vanishing TV leakage by Pinsker's inequality, and the two schemes have the same asymptotic secrecy rate, the definition of the set $\mathcal{P}_n(W_e)$ in Proposition \ref{prop_asymptotic} is less restrictive (by Lemma \ref{Arikan_TVD_bound}), and for finite $n$, the secrecy rate will in general be larger.
\end{rem}
 
\begin{IEEEproof}[Sketch of proof]
The argument is almost the same as for the strong secrecy scheme in \cite{Mahdavifar_Vardy}. 
The vanishing of the TV leakage follows from Bound 2 in Lemma \ref{lemma_bounds}, since
$$S(\M^k|\Z^n) \leq \sum_{i \in \mathcal{P}_n(W_e)} T(W_e^{(i)}) \leq n \delta_n \to 0.$$  
Note that 
$$R_s=\frac{\abs{\mathcal{A}}}{n} \geq \frac{\abs{\mathcal{P}_n(W_e)}}{n}-\frac{\abs{\dsb{1}{n} \setminus \mathcal{G}_n(W_b)}}{n}.$$
Recall that \cite[Theorem 1]{Mahdavifar_Vardy}
$$\lim_{n \to \infty} \abs{\dsb{1}{n} \setminus \mathcal{G}_n(W_b)}/n=1-C(W_b).$$
For $\beta < \alpha < 1/2$, define 
$$\mathcal{P}_n'=\{i \in \dsb{1}{n}:\; Z(W_e^{(i)})\geq 1-2^{-n^{\alpha}}\}.$$ By \cite[Corollary 19]{Mahdavifar_Vardy},
$\lim_{n\to \infty} \abs{\mathcal{P}_n'}/n=1-C(W_e)$.
Furthermore, by Lemma \ref{Arikan_TVD_bound}, $\forall i \in \mathcal{P}_n'$, 
$T(W_e^{(i)}) \leq 
2^{-(n^{\alpha}-1)/2}$.
Therefore, for large enough $n$, $\mathcal{P}_n'\subseteq \mathcal{P}_n(W_e)$ and $\lim_{n\to\infty} \abs{\mathcal{P}_n(W_e)}/n \geq \lim_{n\to \infty} \abs{\mathcal{P}_n'}/n=1-C(W_e)$ and 
\begin{equation} \label{R_s}
\lim_{n\to \infty} R_s \geq C(W_b)-C(W_e).
\end{equation}
Equality in (\ref{R_s}) is proven similarly to \cite[Proposition 22]{Mahdavifar_Vardy} by noting that $\mathcal{G}_n(W_b) \cap \mathcal{P}_n(W_b)=\varnothing$ since $T(W) \geq C(W) \geq \log\frac{2}{1+Z(W)}$ by \cite[Proposition 1]{arikan2009}.
\end{IEEEproof}
Note that similarly to the strong secrecy scheme in \cite{Mahdavifar_Vardy}, we can't guarantee vanishing error probability for Bob, due to the existence of a small set of unreliable and insecure bits.

\begin{algorithm}[bth]
\begin{small}
\SetAlgoLined
 \SetAlgoInsideSkip{smallskip}
 \SetKwInOut{Input}{Input}
 \SetKwInOut{Output}{Output}
\Input{ $W_e$, $W_b$, $n$, $\epsilon$, $\delta$}
\Output{$\mathcal{A} \subseteq \llbracket 1; n \rrbracket$, $\mathcal{G}=\mathcal{A} \cup \mathcal{R} \subseteq \llbracket 1; n \rrbracket$; lower bound $R_s$ for the secrecy rate}
\tcp{Find $\mathcal{G}=\mathcal{A} \cup \mathcal{R}$}
    Compute $Z(W_b^{(i)})$ for $i=1,\ldots,n$\;
    Sort $Z(W_b^{(i)})$ in increasing order and store the vector of sorting indices $\textrm{I}_b$\;
    $k_b=\max\left\{k\in\llbracket 1; n\rrbracket: \sum_{j=1}^k Z\left(W_b^{\textrm{I}_b(j)}\right)<\epsilon\right\}$ \;
    $\mathcal{G}=\{\textrm{I}_b(j) \;|\;j=1,\ldots,k_b\}$\;
    $\mathcal{B}=\{\textrm{I}_b(j) \;|\;j=k_b+1,\ldots,n\}$\;
\tcp{Find $\mathcal{A}$}   
     Compute $T(W_e^{(i)})$ for $i \in \llbracket 1; n \rrbracket$\;
     Sort $\{T(W_e^{(i)})\}_{i \in \mathcal{G}}$ 
     in increasing order and store the vector of sorting indices $\textrm{I}_e$ of size $k_b$\;
     $\delta_0=\frac{1}{2}\sum_{i \in \mathcal{B}} T(W_e^{(i)})$\; \label{line_delta_0}
     \eIf{$\delta_0 > \delta$}
     {$k_e=0$\;}
     {$k_e\!=\!\max\left\{k\in\llbracket 1; k_b\rrbracket\!: \delta_0+\frac{1}{2}\sum_{j=1}^k \!T\!\left(W_e^{\textrm{I}_e(j)}\right)\!<\delta\right\}$\;%
     \label{line_ke}
     }
       $\mathcal{A}=\{\textrm{I}_e(j) \;|\;j=1,\ldots,k_e\}$\;
    $R_s=k_e/n$\;
     \caption{Wiretap code design.} \label{algorithm_Laura}
\end{small}
\end{algorithm}

\subsection{Wiretap code design in finite blocklength}
In finite blocklength $n$, we consider a simple algorithm (Algorithm \ref{algorithm_Laura}) to choose the sets $\mathcal{A},\mathcal{R},\mathcal{B}$ in the polar coding scheme so that $P_e \leq \epsilon$ and $S(\M^k|\Z^n) \leq \delta$. 
First, the Bhattacharyya parameters for Bob's bit-channels $W_b^{(i)}$ are computed and sorted in increasing order; the set $\mathcal{G}=\mathcal{A} \cup \mathcal{R}$ is chosen as the largest possible information set 
such that the Bhattacharyya bound (\ref{Bhattacharyya_bound}) on the error probability is smaller than $\epsilon$. (Alternatively, a Monte-Carlo bound on the error probability can be used if $\epsilon>10^{-7}$). Subsequently, the TVDs for Eve's bit-channels $W_e^{(i)}$ are computed and stored in increasing order; the set $\mathcal{A}$ is chosen as the largest possible subset of $\mathcal{G}$ such that the sum of the TVDs of the bit-channels in $\mathcal{A} \cup \mathcal{B}$ (Bound 2 in Lemma \ref{lemma_bounds}) is smaller than $\delta$. 

\section{Numerical results and Discussion}
\begin{figure*}[h]
\begin{subfigure}{.5\textwidth}
  \centering
  \includegraphics[width=.95\linewidth]{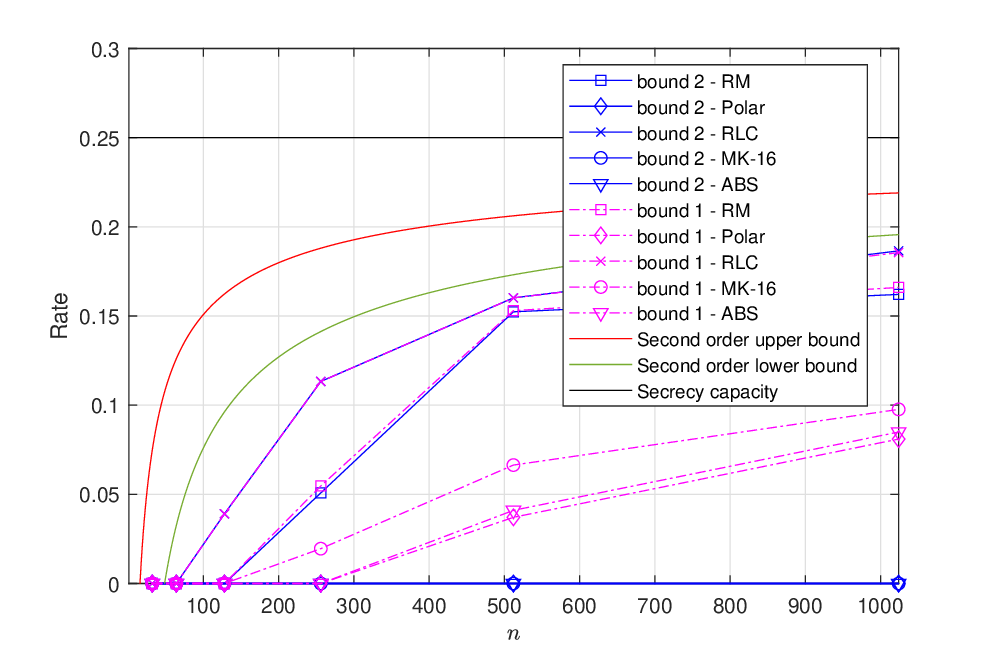}
  \caption{$p_b=0.05$ and $p_e=0.3$}
  \label{fig:res_25}
\end{subfigure}%
\begin{subfigure}{.5\textwidth}
  \centering
  \includegraphics[width=.95\linewidth]{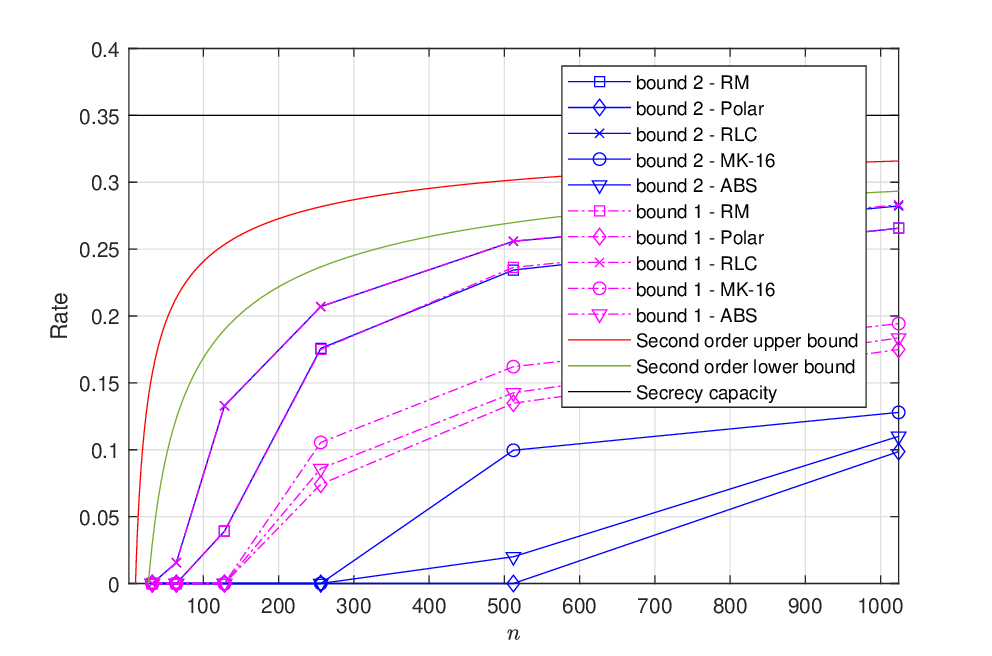}
  \caption{$p_b=0.05$ and $p_e=0.4$}
  \label{fig:res_40}
\end{subfigure}
\caption{Lower secrecy-rate bounds for polarizing codes versus second-order bounds (\ref{YSP_Theorem13_upper}) and (\ref{YSP_Theorem13_lower}) over different degraded binary erasure wiretap channels with error probability constraint $\epsilon=0.001$ and secrecy constraint $\delta=0.01$.}
\label{fig:res}
\end{figure*}
Although the bounds in Section \ref{section_wiretap} are valid for general BMS 
channels, 
the 
computation of the TVDs of the bit-channels is 
problematic in general for most codes, since the cardinality of the output alphabet grows exponentially (although for standard polar codes, they can be upper bounded using Tal and Vardy's algorithm \cite{tal2013construct}).

Therefore, in order to test different families of polarizing codes, we focus on the case where both $W_b$ and $W_e$ are BECs with erasure probabilities $p_b < p_e$.
Due to the degradedness condition, without loss of generality $W_e$ can be seen as the concatenation of $W_b$ with a BEC($p_0$), where $p_e=p_b+(1-p_b)p_0$.
For this model, $C_s=p_e-p_b$, $V_b=p_b(1-p_b)$, $V_e=p_e(1-p_e)$ and the constant in equation (\ref{V_c}) is equal to 
\begin{equation*}
    V_c = (1-p_b)p_0-(1-p_b)^2p_0^2 = (p_e-p_b)(1-(p_e-p_b)).
\end{equation*}

\subsection{Computation of Bounds 1 and 2}
Note that if $W_e$ is a BEC, then all bit-channels $W_e^{(i)}$ are also BECs with parameters $p_{e,i}$, and $T(W_e^{(i)})=1-p_{e,i}$. 
In this case, Bounds 1 and 2 can be computed by Monte-Carlo simulation, similarly to \cite{fazeli2014scaling}, for any polarizing kernel.
Let $\tilde{\U}^n$ be a permutation of $\bar{\U}^n$ such that 
$\tilde{\U}_{1}^{n-k_b}=\U_{\mathcal{B}}$, and $\tilde{\U}_j=\bar{\U}_{\mathrm{I}_e(j)}$ for $j=1,\ldots,k_b$, with $\tilde{G}_n$ being the corresponding column permutation of $G_n$. We want to estimate the erasure parameter $\tilde{p}_{e,i}$ of the channel 
$p_{\Z^n\tilde{\U}_1^{i-1}|\tilde{\U}_{i}}$, which is also a BEC for any linear code \cite{fazeli2014scaling}. 
Given an erasure pattern $\mathcal{E}$ and the set of unerased positions $\mathcal{U}=\llbracket 1; n \rrbracket \setminus \mathcal{E}$, let $\tilde{\Z}_{\mathcal{U}}$ the (permuted) output of Eve's channel restricted to $\mathcal{U}$ and $G_{\mathcal{U}}$ the restriction of $\tilde{G}_n$ to the rows corresponding to unerased positions. Then,
$$\tilde{\Z}_{\mathcal{U}}=\tilde{G}_{\mathcal{U},[1:n]}\tilde{\U}_1^{n}=\tilde{G}_{\mathcal{U},[1:i-1]}\tilde{\U}_1^{i-1}+\tilde{G}_{\mathcal{U},[i,n]}\tilde{\U}_i^{n}.$$
Supposing that the bits in $\tilde{\U}_1^{i-1}$ have been decoded, bit $\tilde{\U}_i$ is unerased if it is possible to solve for $\tilde{\U}_i$ in the linear system
$$\tilde{G}_{\mathcal{U},[i,n]}\tilde{\U}_i^{n}=\tilde{\Z}_{\mathcal{U}} \oplus \tilde{G}_{\mathcal{U},[1:i-1]}\tilde{\U}_1^{i-1},$$
or equivalently if $\rank_{\mathbb{F}_2}(\tilde{G}_{\mathcal{U},[i,n]})=\rank_{\mathbb{F}_2}(\tilde{G}_{\mathcal{U},[i+1,n]})+1$. 
We estimate $\tilde{p}_{e,i}$ by averaging over many random BEC($p_e$) erasure patterns.
The code design is described in Algorithm \ref{algorithm_Laura} for Bound 2; it can be modified for Bound 1 by replacing $T(W_e^{(i)})$ with $\tilde{p}_{e,i}$ in 
lines \ref{line_delta_0} and \ref{line_ke}. 

\subsection{Polarizing codes}
We evaluate our contruction for different families of polarizing codes, namely Polar, Multi-kernel (MK), ABS, Reed-Muller (RM) and Random Linear (RL) codes. 

RM codes have the same matrix $G_n$ of polar codes, however the columns are reordered in increasing weight order; this 
leads to 
a faster polarization \cite{abbe2020reed}, but makes the decoding unfeasible for mid-range rates. 
MK codes are constructed by combining different kernels; in particular, we use the $G_{16}$ kernel in \cite{yao2019explicit} in combination with Arikan's kernel $G_2$. 
This larger kernel permits to achieve a better polarization rate than polar codes, even if not as good as the one of RM codes, 
but entails a larger decoding complexity than polar codes \cite{bioglio2025neural}.

ABS codes \cite{li2022adjacent} represent a compromise between polar and MK codes, having an intermediate polarization rate between the two codes but decoding complexity that is still manageable. 
We adapt ABS codes to the wiretap case by swapping bits $i$ and $i+1$ in the recursive construction only when it accelerates polarization for both $W_b$ and $W_e$, i.e. when $T(W_{e}^{(i)}) \geq T(W_{e}^{(i+1)})$ and $C(W_{b}^{(i)}) \geq C(W_{b}^{(i+1)})$. 

Finally, RL codes, based on a random Bernoulli invertible matrix, upper-bound polarizing code secrecy with optimal polarization but impractical decoding \cite{fazeli2020binary}.

\subsection{Lower bounds on the achievable secrecy rates}
Figure \ref{fig:res} shows the lower bounds on the secrecy rate $R_s$ computed using Bounds 1 and 2 for the polarizing codes introduced previously. 
We consider a fixed BEC($0.05$) for Bob, and we propose two scenarios for Eve, namely a BEC($0.3$) and a BEC($0.4$); in both cases, we impose an error probability constraint $\epsilon=0.001$ and a secrecy constraint $\delta=0.01$.

Note that Bounds 1 and 2 are almost the same for RM and RL codes,
due to the fact that the virtual bit-channels are mostly ordered in reliability order, and the set $\mathcal{R}$ usually includes the last elements of $\dsb{1}{n}$. 
Moreover, RL codes achieve a better secrecy rate 
than RM codes, having a better polarization rate; their performance comes close to the lower bound  (\ref{YSP_Theorem13_lower}) for the second order secrecy rate. We conjecture that 
the lower bound (\ref{YSP_Theorem13_lower}) is tighter than the upper bound   (\ref{YSP_Theorem13_upper}).

On the other hand, ABS, Polar and MK codes exhibit a gap between Bound 1 and 2; in this case, the reordering of elements in $\mathcal{G}$ for the calculation of Bound 1 permits to slightly decrease the TVD bounds for the elements in $\mathcal{A}$, allowing for the introduction of bit-channels that were included in $\mathcal{R}$ for Bound 2. 
These results suggest that ABS codes strike a balance between achievable secrecy rate and decoding complexity, outperforming polar codes while remaining more practical than MK and RM codes.

\footnotesize
\bibliographystyle{IEEEtran}
\bibliography{IEEEabrv,wiretap}

\end{document}